\def\be{\begin{equation}}
	\def\ee{\end{equation}}
\def\ba{\begin{array}}
	\def\ea{\end{array}}
\def\mathbi#1{\text{\em #1}}
\def\qed{\leavevmode\unskip\penalty9999 \hbox{}\nobreak\hfill
	\quad\hbox{\leavevmode  \hbox to.77778em{%
			\hfil\vrule   \vbox to.675em%
			{\hrule width.6em\vfil\hrule}\vrule\hfil}}
	\par\vskip3pt}
\newtheorem{theorem}{Theorem}
\newtheorem{corollary}{Corollary}
\newtheorem{lemma}{Lemma}
\newtheorem{observation}{Observation}
\begin{document}
\title{\large\bf Notes on detection and measurement of quantum coherence}

\author{Yiding Wang$^{1}$, Tinggui Zhang$^{1, \dag}$}
\affiliation{ ${1}$ School of Mathematics and Statistics, Hainan Normal University, Haikou, 571158, China \\
	$^{\dag}$ Correspondence to tinggui333@163.com}

\bigskip
\bigskip

\begin{abstract}
Quantum coherence is one of the most basic characteristics of quantum mechanics. Here we give some methods to detect and measure quantum coherence. Firstly, we propose a coherence criterion without full quantum state tomography based on partial transposition. Moreover, we present a coherent nonlinear detection strategy from witnesses, in which we find that for some coherent states, normal witness detection fails but our nonlinear detection succeeds. In addition, we prove that when the nonlinear detection on the two copies of the coherent state fails, the nonlinear detection on the three copies may be successful. Finally, due to the difficulty in calculating robustness of coherence for general states, we introduce a lower bound for coherent robustness based on the witness operator, and after comparing our lower bound with the currently known lower bound, one show that our lower bound is better. Coherence is believed to play a crucial role in quantum information tasks, making the detection and quantization of coherence particularly significant. Therefore, these results help to open up new avenues for advancement in quantum theory.
\end{abstract}

	\pacs{04.70.Dy, 03.65.Ud, 04.62.+v} \maketitle

\section{I. Introduction}

Quantum coherence, the most fundamental feature signifying quantumness in a single system and underpinning all quantum correlations in composite systems \cite{rhph,mbc,bcps}, plays a significant role in quantum theory. Coherence has many applications in various quantum technologies, such as quantum metrology \cite{vgsl,vgsll}, quantum computing \cite{mh}, nanoscale thermodynamics \cite{ggmp,mldj,vngg,gfjd} and biological systems \cite{sl,sfhm,erra}. Due to the importance of coherence in various quantum information-processing tasks, the detection and quantization of coherence have become particularly crucial and critical.

Consider a $d$-dimensional Hilbert space $\mathcal{H}$ with computational basis $\mathcal{B}:=\{|i\rangle|\,i=1,2,...,d\}$, a quantum state is defined as an incoherent state under the basis $\mathcal{B}$ \cite{bcp} if and only if its density matrix can be expressed as 
\begin{equation}\label{e1}
	\rho=\sum_{i=1}^{d}p_i|i\rangle\langle i|,
\end{equation}
where $0\leq p_i\leq1$ and $\sum_{i=1}^{d}p_i=1$. We denote that IC is the set of all incoherent states. In terms of coherence detection, an experimentally implementable method to detect coherence is by using coherence witness \cite{ctmm,pcbn,wtwy,hral,nzg,ddxy,bhws,zzy,wdm,lxfz}. Coherence witness was first introduced by Napoli et al. \cite{ctmm}, in which they provided the definition of coherence witness and based on this witness, presented a lower bound on the robustness of coherence. In \cite{hral}, the authors define another coherent witness as an observable which mean value vanishes for all incoherent states but nonzero for some coherent states, and its mean value can quantitatively reveal the amount of coherence. The authors in \cite{nzg} illustrate that the normal witness could mistake an incoherent state as a coherent state due to the inaccurate settings of measurement bases and propose a measurement-device-independent coherence witness scheme without any assumptions on the measurement settings. Moreover, Wang et al. show that a coherence witness is optimal if and only if all of its diagonal elements are zero \cite{bhws}. In \cite{zzy}, two coherence witnesses are given based on their witness-constructing manner, which are also used to estimate the robustness of coherence, $l_1$-norm and $l_2$-norm of coherence measure. Recently, Li et al. \cite{lxfz} prove that if the trace of the observable is a known quantity, the range of coherence-detection capabilities can be extended, while a series of coherent criteria have been established based on this prior knowledge. However, the construction of witnesses is not a straightforward task and relies on making use of specific knowledge in the state to be detected.

For quantization of quantum coherence, significant advancements of which have been made based on many innovative approaches \cite{tbmc,asus,xyhz,awdy,ctmm,pcbn,xqtg,kfbu,zxjs,mrtr,ywss,fbhk,dhyl,jxlh,llqw,bkb,rgps,ysnl,llss,dhyc}. In which many coherence measures have been presented, such as robustness of coherence \cite{ctmm,pcbn}, geometric measure of coherence \cite{asus}, and distance-based coherence measure \cite{tbmc,zxjs}. In \cite{xqtg}, the authors propose a valid quantum coherence measure called coherence concurrence and show that any degree of coherence with respect to some reference basis can be converted to entanglement via incoherent operations. Bu et al. \cite{kfbu} introduce a new coherence measure based on maximum relative entropy, and prove that the smooth maximum relative entropy of coherence provides a lower bound of one-shot coherence cost, and the maximum relative entropy of coherence is equivalent to the relative entropy of coherence in the asymptotic limit for a suitable smooth maximum relative entropy. In addition, incoherent states can be seen as generated by a von Neumann measurement, inspired by this, positive-operator-valued measures(POVMs)-based coherence measures and POVM-incoherent operations were established in \cite{fbhk}. Ray et al. in \cite{rgps} provide a lower bound on the quantum coherence for an arbitrary quantum state by using a noncommutativity estimator of an arbitrary observable of subunit norm. The coherence measures are also investigated in terms of Fisher information \cite{dhyc}. More recently, A. Budiyono et al. illustrate that the $l_1$-norm of the Kirkwood-Dirac quasiprobability over an incoherent reference basis and a second basis can be used to quantify quantum coherence \cite{bhkd}. Based on the quantum optimal transport cost, the author in \cite{xshi} establish a coherence measure via convex-roofs. 

In this work, we first introduce an experimentally friend coherence criterion without full quantum state tomography [Section 2]. Secondly, we propose coherence nonlinear detection to more effectively utilize existing coherence witnesses [Section 3]. Moreover, robustness of coherence is a useful measure for coherence, but for general coherent states, coherent robustness is difficult to calculate. So we provide a lower bound for coherence robustness from the witnesses and demonstrate its effectiveness by comparing it with other currently known lower bounds [Section 4]. We summarize and discuss our conclusions in the last section.

\section{II. Coherence criterion without any prior knowledge}

Partial transposition was initially introduced as a tool for detecting entanglement \cite{ap}, in which Peres proposed the positive partial transpose (PPT) criterion. Given a bipartite state $\rho=(\rho)_{ij,kl} \in H_A\otimes H_B$, where $i,k=1,2,\cdots,M$ and $j,l=1,2,\cdots,N$; $M$ and $N$ are the dimensions of subsystem $A$ and subsystem $B$, respectively. If the state is separable, then the matrix $\rho^{\tau_B}$ obtained from partial transpose with respect to subsystem $B$ is still positive semidefinite, where $(\rho^{\tau_B})_{ij,kl}=(\rho)_{il,kj}$. Any state that violates the PPT criterion is an entangled one. Previous works have shown that entanglement and coherence, as two types of quantum correlations, have many similarities. Thus we consider whether partial transposition can be applied to coherence detection.

Now, consider the bipartite scenario. With respect to the partially transposed matrix $\rho^{\tau_B}$ of $\rho$, the PT moments are defined as
\begin{equation}\label{e2}
	T_k=\text{Tr}[(\rho^{\tau_B})^k],~~~k=1,2,...,MN,
\end{equation}
then one have the following results.
\begin{theorem}
	PT moments of an incoherent state $\rho$ is invariant under partial transposition:
	\begin{equation}\label{e3}
		\text{Tr}(\rho^k)=\text{Tr}[(\rho^{\tau_B})^k],~~~k=1,2,...,MN.
	\end{equation}
\end{theorem}
\begin{proof}
	The characteristic polynomial of $\rho^{\tau_B}$ is considered by us
	\begin{equation}\label{e4}
		\xi_0\lambda^{MN}-\xi_1\lambda^{MN-1}+...+(-1)^{MN}\xi_{MN},
	\end{equation}
	where $\xi_0=1$, $\xi_k=\sum_{\{s_k\}\in S}\prod_{j\in s_k}\lambda^{'}_j$, $k=1,2,...,MN,$ ${s_k}$ denotes a subset of $S=\{1,2,...,MN\}$ with $k$ elements and $\lambda^{'}_j\,(j=1,...,MN)$ represent the eigenvalues of $\rho^{\tau_B}$. Note that the characteristic polynomial coefficients and
	the PT moments have the following relations \cite{jfjq},
	\begin{equation}\label{e5}
		\xi_{k+1}=\frac1{k+1}\sum_{l=0}^{k}(-1)^l\xi_{k-l}T_{l+1}
	\end{equation}
	for $k=0,1,...,MN-1$. 
	
According to (\ref{e4}) and (\ref{e5}), it is proved that the PT moments of $\rho$ are invariant, which is equivalent to proving that the eigenvalues of $\rho$ are invariant after partial transposition. Therefore, one first prove that the basis-independence feature of the spectrum for partially transposed state. In fact, the different choices of basis to perform partial transposition can be captured with a local unitary operation $U_B$ acting on Bob’s side,
\begin{equation*}
	\begin{split}
		\lambda[(U_B\rho\,U_B^{\dagger})^{\tau_B}]&=\lambda[(U_B\rho\,U_B^{\dagger})^{\tau_A}]\\
		&=\lambda(U_B\rho^{\tau_A}U_B^{\dagger})\\
		&=\lambda(\rho^{\tau_A})\\
		&=\lambda(\rho^{\tau_B}),
	\end{split}
\end{equation*}
where $\rho^{\tau_A}$ obtained from partial transpose with respect to subsystem $A$. The first equation is due to the fact that the spectrum remains unchanged under usual transposition and $\lambda(\rho^{\tau_B})=\lambda[(\rho^{\tau_A})^\tau]$, where $\tau$ specifies the usual transposition acting on the joint system. An incoherent state has the form of (\ref{e1}) under the computational basis $\mathcal{B}$, so its spectrum remains unchanged under partial transposition. We complete the proof of Theorem 1.
\end{proof}
According to Theorem 1, if any PT moment of a quantum state changes under partial transposition, the state is coherent, and full quantum tomography is not required to detect coherence.  In addition, it is worth noting that PT moments have the advantage that they can be estimated using shadow tomography \cite{ekhb} in a more efficient way than if one had to reconstruct $\rho$ via full quantum state tomography. In other words, our criterion can also be experimentally implemented.

\mathbi{Example 1}.

The family of quantum states with two parameters is considered by us,
$$
\sigma=\frac{1}{4}\left(\begin{array}{ccccccccc}
	1 & 0 & 0 & 0 & 0 & 0 & 0 & 0 & a \\
	0 & b & \frac{b}{2} & 0 & 0 & 0 & 0 & 0 & 0 \\
	0 & \frac{b}{2} & b & 0 & 0 & 0 & 0 & 0 & 0\\
	0 & 0 & 0 & 0 & 0 & 0 & 0 & 0 & 0 \\
	0 & 0 & 0 & 0 & 1-b & 0 & b & 0 & 0 \\
	0 & 0 & 0 & 0 & 0 & 0 & 0 & 0 & 0 \\
	0 & 0 & 0 & 0 & b & 0 & 1-b & 0 & 0 \\
	0 & 0 & 0 & 0 & 0 & 0 & 0 & 0 & 0 \\
	a & 0 & 0 & 0 & 0 & 0 & 0 & 0 & 1
\end{array}\right),
$$
with $0\leq a\leq1$ and $0\leq b\leq\frac{1}{2}$.

In fact, our justification begins with the third moment, which is due to the fact that partial transposition is trace-preserving and $\text{Tr}(\rho^2)=T_2$ is trivially satisfied. By using Theorem 1, we can obtain that our criterion can detect all coherent states in this state family, and note that $\text{Tr}(\sigma^3)-T_3=0$ if and only if $a=b=0$, which means that one do not even need to compare all the moments, sometimes the first few moments may be enough to complete the task. Please refer to Figure 1 for details.
\begin{figure}[htbp]
	\centering
	\includegraphics[width=0.5\textwidth]{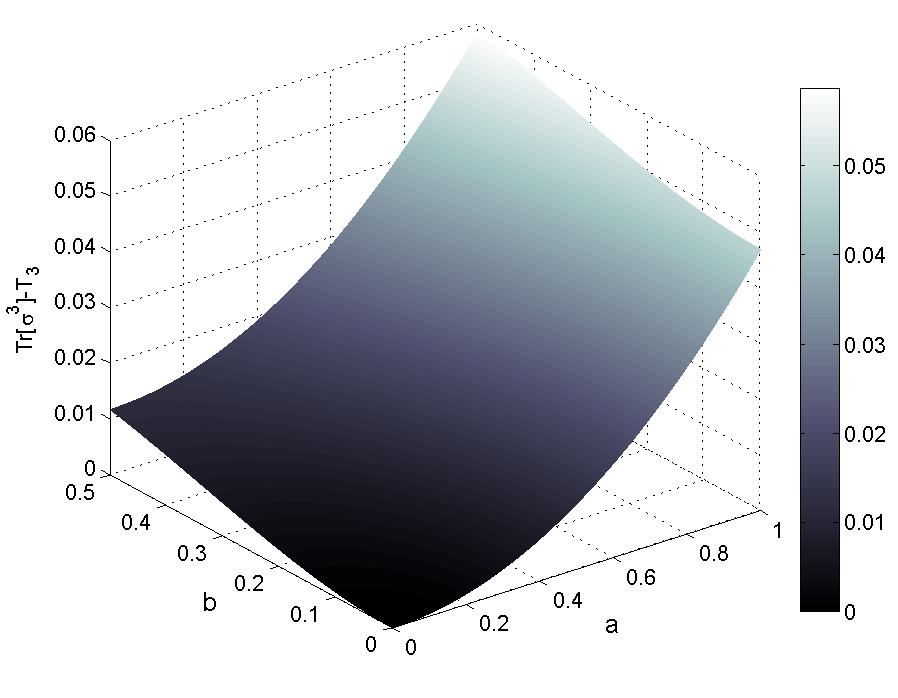}
	\vspace{-2em} \caption{The relationship between $\text{Tr}(\sigma^3)-T_3$ and two parameters for $\sigma$ in example 1 is shown in the Figure.} \label{Fig.1}
\end{figure}

\section{III. Nonlinear detection of coherence from witnesses}
A coherence witness is a Hermitian operator $W$ such that $\text{Tr}(W\rho_{co})<0$ for a coherent state $\rho_{co}$ and $\text{Tr}(W\rho_{IC})\geq0$ for all incoherent states $\rho_{IC}$. In addition, there is another type of coherence witness, which works in the following way: $\text{Tr}(W\rho_{co})\neq0$ for a coherent state $\rho_{co}$ and $\text{Tr}(W\rho_{IC})=0$ for all incoherent states $\rho_{IC}$.

Similar to the linear detection of entanglement witness, the above method can also be called the linear detection of coherence witness. The authors introduced a method called nonlinear entanglement detection in \cite{rh}, where the linear detection failed, but the nonlinear measurement considered by the author may succeed. So a natural idea is to apply this nonlinear detection to coherence witness. We present a nonlinear detection method for coherence and its detection strategy is similar to the linear counterpart: for a nonlinear coherence witness $\mathcal{W}$, we also have $\text{Tr}(\mathcal{W}\rho_{IC}^{\otimes k})=0$ for all incoherent states, while $\text{Tr}(\mathcal{W}\rho_{co}^{\otimes k})\neq 0$ for a coherent state $\rho_{co}$.

Like entangled counterpart \cite{rh}, we also require nonlinear witness $\mathcal{W}$ to take the form of the tensor product of witness. Its working way can be clearly explained by the following simple example. For a bipartite state $\rho_{AB}$, one take the tensor product of two bipartite witness,
$$
\mathcal{W}=W_{A_1A_2}\otimes V_{B_1B_2}.
$$
Here, $A_1$ and $B_1$ denote the first and second qubits of the quantum state, respectively, where $A_2$ and $B_2$ correspond to the first and second qubits of the copy state. Then the expectation value
$$
\langle\mathcal{W}\rangle_{\rho^{\otimes2}}=\text{Tr}[(W_{A_1A_2}\otimes V_{B_1B_2})\rho_{A_1B_1}\otimes\rho_{A_2B_2}]
$$
is $0$ if $\rho$ is incoherent. See Figure 2. The observation in the following shows that our strategy can indeed work.
\begin{figure}[htbp]
	\centering
	\includegraphics[width=0.35\textwidth]{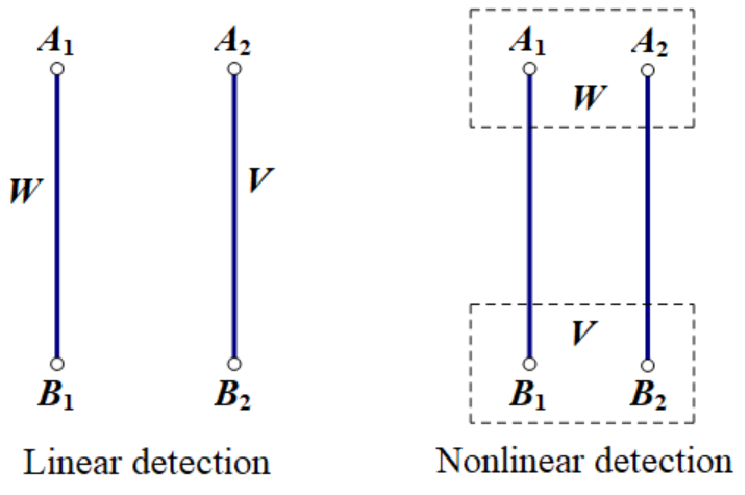}
	\vspace{-0.5em} \caption{The ideas of linear detection and nonlinear detection are shown in the Figure.} \label{Fig.2}
\end{figure}

\begin{observation}
	There exist a coherent state $\rho$ and witness $W$ and $V$ such that $\text{Tr}(W\rho)=0$ and $\text{Tr}(V\rho)=0$, but $\text{Tr}[(W_{A_1A_2}\otimes V_{B_1B_2})\rho^{\otimes2}]\neq0$.
	
	In simple terms, linear detection fails, but nonlinear coherence detection with $W\otimes V$ succeeds.
\end{observation}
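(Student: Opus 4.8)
The statement is purely existential, so the plan is to exhibit one explicit triple $(\rho, W, V)$. I would start from the characterization underlying the second type of coherence witness: a Hermitian operator has vanishing mean on every incoherent state if and only if all of its diagonal entries in the reference basis $\mathcal{B}$ vanish (the observation used in \cite{bhws}, which follows at once from $\text{Tr}(W\rho_{IC})=\sum_i p_i W_{ii}$). I would therefore only consider $W$ and $V$ with zero diagonal. This single choice does double duty: it makes each of $W$ and $V$ a legitimate witness, and it guarantees that the nonlinear witness $\mathcal{W}=W_{A_1A_2}\otimes V_{B_1B_2}$ annihilates every incoherent two-copy state, since $\rho_{IC}^{\otimes2}$ is diagonal in the product basis while $W_{A_1A_2}\otimes V_{B_1B_2}$ is again diagonal-free (a tensor product of two diagonal-free operators has zero diagonal). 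Hence the only remaining task is to engineer a gap between the linear and nonlinear values on a suitable coherent $\rho$.

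The mechanism I would exploit is that $\text{Tr}(W\rho)$ and $\text{Tr}(V\rho)$ are \emph{linear} in the coherences (off-diagonal entries) of $\rho$ and, for real symmetric witnesses, see only their real parts, whereas the cross-copy functional $\langle\mathcal{W}\rangle_{\rho^{\otimes2}}$ is \emph{quadratic} in $\rho$ and collects products of two coherences. Consequently, if I place a single coherence of $\rho$ on the purely imaginary axis, the linear functionals vanish while the quadratic one retains a term proportional to the square of that imaginary number, which is real and nonzero. Concretely I would take the two-qubit witnesses $W=V=|00\rangle\langle11|+|11\rangle\langle00|$, which are diagonal-free and connect $|00\rangle\leftrightarrow|11\rangle$, together with
$$\rho=\tfrac12|00\rangle\langle00|+\tfrac12|11\rangle\langle11|+\tfrac{it}{2}|00\rangle\langle11|-\tfrac{it}{2}|11\rangle\langle00|,\qquad 0<t\le1,$$
(equivalently the pure state $\tfrac1{\sqrt2}(|00\rangle+i|11\rangle)$ when $t=1$), whose only coherence $\rho_{00,11}=it/2$ is purely imaginary.

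It then remains to verify the three numbers. From $W_{00,11}=W_{11,00}=1$ one gets $\text{Tr}(W\rho)=\rho_{11,00}+\rho_{00,11}=-it/2+it/2=0$, and likewise $\text{Tr}(V\rho)=0$, so both linear tests fail. Expanding the cross-copy trace, the support of $W$ on the $A$-copies and of $V$ on the $B$-copies forces each $\rho$-factor to be either $\rho_{00,11}$ or $\rho_{11,00}$; the mixed terms carry a factor $\rho_{01,10}=0$ and drop out, leaving $(it/2)^2+(-it/2)^2=-t^2/2\neq0$. Finally I would confirm that $\rho\ge0$ (its nontrivial block has eigenvalues $\tfrac12(1\pm t)$) and that $\rho$ is genuinely coherent (nonzero off-diagonal entry). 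The main obstacle is the bookkeeping in this last step: one must align the coherence sector of $\rho$ with the support of $W$ and $V$ so that the surviving quadratic term is not accidentally cancelled by the cross terms, while keeping $\rho$ positive and the two linear values exactly zero. Confining the single coherence to the $|00\rangle$--$|11\rangle$ sector, matched to witnesses supported there, is precisely what makes all three conditions hold simultaneously.
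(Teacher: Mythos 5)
Your construction is correct, and every step checks out: $W=V=|00\rangle\langle11|+|11\rangle\langle00|$ are zero-diagonal and hence vanish on all incoherent states (and so does their tensor product across copies, since the diagonal of $A\otimes B$ consists of products of diagonal entries), your $\rho$ is positive semidefinite with eigenvalues $\tfrac12(1\pm t)$ on its nontrivial block, the linear values are $\rho_{11,00}+\rho_{00,11}=-\tfrac{it}{2}+\tfrac{it}{2}=0$, and the two-copy expectation reduces to $\rho_{11,00}^2+\rho_{00,11}^2=-t^2/2\neq0$ because the $(01,10)$-sector terms vanish. However, your route is genuinely different from the paper's. The paper proves the observation via Example~2: an $X$ state $\rho_x$ with \emph{real} coherences $\alpha$ (in the $|00\rangle$--$|11\rangle$ sector) and $\beta$ (in the $|01\rangle$--$|10\rangle$ sector), paired with witnesses $W=|00\rangle\langle10|+|10\rangle\langle00|$ and $V=|01\rangle\langle11|+|11\rangle\langle01|$ whose supports lie entirely in sectors where $\rho_x$ has \emph{no} coherence, so the linear tests vanish identically for trivial support reasons; the point is that the cross-copy tensor product $W_{A_1A_2}\otimes V_{B_1B_2}$ recombines single-qubit ``halves'' of $W$ and $V$ into matrix units that do probe the $\alpha$ and $\beta$ coherences, yielding $\tfrac{\alpha+\beta}{8}$. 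Your mechanism is instead one of phase cancellation: the witness support is \emph{aligned} with the state's single coherence, but that coherence is purely imaginary, so the linear functional (which for a real symmetric witness sees only the real part) cancels, while the quadratic functional squares the imaginary entry into a nonzero real number. The paper's example buys the stronger message that nonlinear detection can reach coherence sectors invisible to each witness individually, and it detects a two-parameter family at once; your example buys a shorter verification and highlights a complementary failure mode of linear detection, namely insensitivity to the phase of a coherence. Both are valid proofs of the existential claim.
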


In the previous section, our coherence criterion cannot detect the $X$ state with the same anti-diagonal terms. Now using our nonlinear detection to consider a more general class of $X$ states.

\mathbi{Example 2}.

$$
\rho_{x}=\frac{1}{4}\left(\begin{array}{cccc}
	1 & 0 & 0 & \alpha \\
	0 & 1 & \beta & 0 \\
	0 & \beta & 1 & 0 \\
	\alpha & 0 & 0 & 1 \\
\end{array}\right),
$$
where $0\leq\alpha,\beta\leq1$. The witnesses we adopt are
\begin{equation*}
	\begin{split}
		W=&|00\rangle\langle10|+|10\rangle\langle00|,\\
		V=&|01\rangle\langle11|+|11\rangle\langle01|.
	\end{split}
\end{equation*}
Obviously, the coherence of $\rho_x$ cannot be detected by $W$ and $V$, i.e. $\text{Tr}(W\rho_x)=\text{Tr}(V\rho_x)=0$. However, our strategy shows that $\text{Tr}[(W_{A_1A_2}\otimes V_{B_1B_2})\rho_x^{\otimes2}]=\frac{\alpha+\beta}{8}$, see Appendix A. When we apply the coherence witnesses $W$ and $V$, even if they are not the optimal coherent witness for this $X$ state, we can still detect their coherence based on our nonlinear detection.

Furthermore, although the expectation values need to be calculated in a way that matches the dimensions of $\mathcal{W}$ and $\rho^{\otimes k}$, the dimensions of witness $W$ and $\rho$ can be different.

\mathbi{Example 3}.

Take the 2-qubit witness
$$
W=\frac{1}{2}(X\otimes X+Y\otimes Y)
$$
and the GHZ state affected by white noise $\rho_g=\frac{1-g}{8}\mathbbm{1}+g|GHZ\rangle\langle GHZ|$, where $|GHZ\rangle=\frac{1}{\sqrt{2}}(|000\rangle+|111\rangle)$.

By calculation, it is not difficult to obtain that $\text{Tr}(W^{\otimes3}\rho_g^{\otimes2})=0$. However, by utilizing our nonlinear method, one have
$$
\text{Tr}[(W_{A_1A_2}\otimes W_{B_1B_2}\otimes W_{C_1C_2})\rho_g^{\otimes2}]=\frac{1}{2}\neq0.
$$
Thus $\rho_g$ becomes detectable using our strategy with the coherence witness $W$. The calculation process is detailed in Appendix B.

In addition to the non-linear detection method mentioned above, we can also consider another non-linear measurement method, as shown in Figure 3.
\begin{figure}[htbp]
	\centering
	\includegraphics[width=0.35\textwidth]{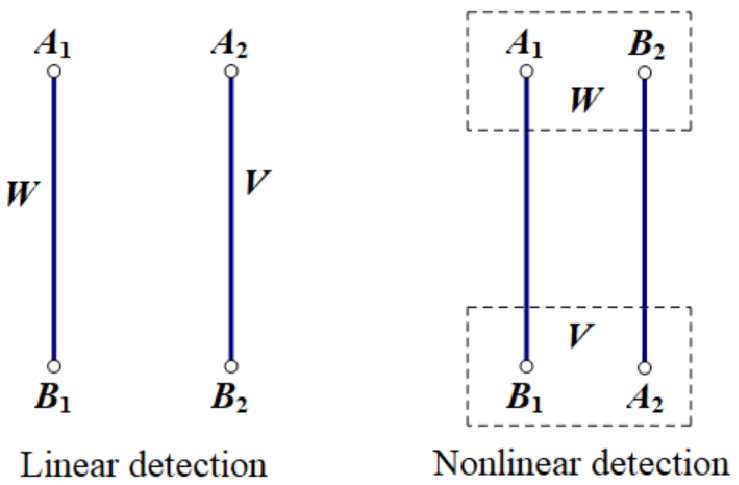}
	\vspace{-0.5em} \caption{The another nonlinear detection method is shown in the Figure.} \label{Fig.3}
\end{figure}

It is not difficult to verify that for example 2, we have $\text{Tr}[(W_{A_1B_2}\otimes V_{B_1A_2})\rho_x^{\otimes2}]=\frac{\alpha+\beta}{8}$, and for example 3, we have 
$$
\text{Tr}[(W_{A_1B_2}\otimes W_{B_1C_2}\otimes W_{C_1A_2})\rho_g^{\otimes2}]=\frac{1}{2}\neq0.
$$
In other words, this non-linear detection method achieves the same effect as the former. However, the following observation and example may demonstrate an advantage of the latter.

\begin{observation}
	There exists a coherent state $\rho_{co}$ and coherence witnesses $W_i\,(i=1,2,3)$, where both linear detection and nonlinear detection of two copies fail for $\rho_{co}$, i.e. 
	$$
	\text{Tr}(W_i\rho_{co})=\text{Tr}[(W_{i}\otimes W_{j})\rho_{co}^{\otimes2}]=0,\,i,j=1,2,3.
	$$
	However we have
	$$
	\text{Tr}[(W_{i}\otimes W_{j}\otimes W_k)\rho_{co}^{\otimes3}]\neq0
	$$
	for some $i,j,k\in\{1,2,3\}$.
	
	In simple terms, the nonlinear detection of two copies fails, but three copies succeeds.
\end{observation}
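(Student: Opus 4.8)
The plan is to prove the statement constructively, exhibiting an explicit coherent state $\rho_{co}$ together with three explicit witnesses $W_1,W_2,W_3$ and verifying the three detection levels directly, in the spirit of Examples 2 and 3. Guided by the GHZ computation of Example 3, I would take $\rho_{co}$ to be a multiqubit state (three qubits being the natural first candidate) whose coherence is concentrated in a single ``collective'' off-diagonal element --- for instance a state diagonal in the computational basis except for one genuinely many-body coherence such as $\rho_{000,111}$, or a three-cycle of coherences $\rho_{12},\rho_{23},\rho_{31}$ in the relevant subspace. The witnesses $W_1,W_2,W_3$ would each be Hermitian, zero-diagonal two-qubit operators of the hopping type $\tfrac12(X\otimes X+Y\otimes Y)$ used in Example 3, and the three-copy detection would use the cyclic crossing of Figure 3 (the ``latter'' method), since its advantage is precisely that it realizes invariants genuinely linking all three copies rather than factorizing over them.

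The verification splits into three parts. First, since each $W_i$ has vanishing diagonal entries in the reference basis, $\text{Tr}(W_i\rho_{IC})=\sum_k p_k\langle k|W_i|k\rangle=0$ for every incoherent state, so the $W_i$ are legitimate coherence witnesses. Second, I would show that every linear value $\text{Tr}(W_i\rho_{co})$ and every two-copy value $\text{Tr}[(W_i\otimes W_j)\rho_{co}^{\otimes2}]$ vanishes. The mechanism to exploit here is the one visible in Example 2, where the crossed expectation factorized into a single-copy coherence functional times a probability: with only one or two copies the contraction reaches only partial pieces of the collective coherence, so it always carries a factor that is identically zero on $\rho_{co}$ (the missing pieces of the collective flip operator). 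Third, I would compute one three-copy value $\text{Tr}[(W_i\otimes W_j\otimes W_k)\rho_{co}^{\otimes3}]$ and show that now the three witness pieces close up --- across the cyclic crossing they assemble into the collective operator reaching $\rho_{000,111}$, or they close the three-cycle $\rho_{12}\rho_{23}\rho_{31}$ --- yielding a nonzero number.

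The main obstacle is the combinatorial engineering of this ``threshold'' behaviour: one must choose $\rho_{co}$, the three witnesses, and the precise crossing pattern (which subsystem of which copy each tensor factor acts on) so that all $3$ linear and all $9$ two-copy conditions vanish simultaneously, while keeping at least one three-copy contraction nonzero. Concretely, the coherence of $\rho_{co}$ must be genuinely three-fold --- invisible to any probe assembled from one or two copies but visible to the one assembled from three, the order-three analogue of the GHZ mechanism of Example 3. I expect the non-vanishing at three copies to follow easily once the state is correctly chosen; the delicate part is the simultaneous vanishing of the two-copy family, which I would establish by the same explicit contraction used in Appendices A and B, after reducing the number of independent conditions using the permutation symmetry among the copies and any symmetry built into $\rho_{co}$.
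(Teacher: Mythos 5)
Your overall strategy is the right one and is the same as the paper's: the paper proves the Observation constructively via Example~4 and Appendix~C, with a three-qubit state whose only coherence sits in the single collective element $|000\rangle\langle 111|$ and with the cyclically crossed three-copy contraction of Figure~4. However, the proposal stops exactly where the proof begins: the entire content of the statement is the existence of a concrete $(\rho_{co}, W_i)$ pair with the required threshold behaviour, and you explicitly defer the ``combinatorial engineering'' that produces it, acknowledging that the simultaneous vanishing of all two-copy contractions is the delicate part you have not established. That is a genuine gap, not a routine verification. Worse, the specific ingredients you float would not work: with a GHZ-type coherence $\rho_{000,111}$ and hopping witnesses of the form $\tfrac12(X\otimes X+Y\otimes Y)$, the two-copy crossed contraction already detects the state --- this is precisely Example~3 of the paper, where $\text{Tr}[(W_{A_1A_2}\otimes W_{B_1B_2}\otimes W_{C_1C_2})\rho_g^{\otimes 2}]=\tfrac12\neq 0$ --- so your candidate fails the ``two copies must fail'' half of the claim.

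The missing idea is the deliberate \emph{mismatch} between the witness and the state's coherence. The paper takes $\rho_c$ diagonal except for the element $|000\rangle\langle 111|$ (plus its conjugate) and the single witness $W=|001\rangle\langle 111|+|111\rangle\langle 001|$, whose flip pattern is offset from the state's coherence by exactly one bit (the $C$ slot). At one copy, $\text{Tr}(W\rho_c)=2\,\mathrm{Re}\,\rho_{111,001}=0$ because that matrix element is absent. At two copies, the crossed contraction $W_{A_1B_2C_2}\otimes W_{B_1C_1A_2}$ still forces each copy to supply a matrix element that $\rho_c$ does not have, so it vanishes. Only at three copies does the cyclic assignment $W_{A_1B_2C_3}\otimes W_{B_1C_2A_3}\otimes W_{C_1A_2B_3}$ distribute the offset bits so that each copy is asked only for elements it actually possesses ($\rho_{000,111}$ and $\rho_{111,111}$), giving $c^3\neq 0$. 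Without identifying this offset mechanism and exhibiting the explicit state, witness, and index assignment, the proposal does not prove the Observation.
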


\mathbi{Example 4}.

Let us consider the tripartite quantum state 
$$
\rho_c=\frac{1}{2}\left(\begin{array}{cccccccc}
	c & 0 & 0 & 0 & 0 & 0 & 0 & c \\
	0 & 1-c & 0 & 0 & 0 & 0 & 0 & 0 \\
	0 & 0 & 0 & 0 & 0 & 0 & 0 & 0 \\
	0 & 0 & 0 & 0 & 0 & 0 & 0 & 0 \\
	0 & 0 & 0 & 0 & 0 & 0 & 0 & 0 \\
	0 & 0 & 0 & 0 & 0 & 0 & 0 & 0 \\
	0 & 0 & 0 & 0 & 0 & 0 & 1-c & 0 \\
	c & 0 & 0 & 0 & 0 & 0 & 0 & c
\end{array}\right),
$$
where $0<c\leq1$, and the coherence witness is taken as $W=|001\rangle\langle111|+|111\rangle\langle001|$.

After calculation, it is not difficult to verify that 
$$
\text{Tr}(W\rho_c)=\text{Tr}[(W_{A_1B_2C_2}\otimes W_{B_1C_1A_2})\rho_c^{\otimes2}]=0.
$$
But when we consider the nonlinear detection of three copies, one have
$$
\text{Tr}[(W_{A_1B_2C_3}\otimes W_{B_1C_2A_3}\otimes W_{C_1A_2B_3})\rho_c^{\otimes3}]\neq0.
$$
Details can be found in Appendix C.

What does this example tell us? If linear detection fails, we can consider nonlinear detection of two copies; If the nonlinear detection of two copies fails, we can consider three copies and so on. This provides us with a new perspective and a novel path for our nonlinear idea. This approach can be seen in Figure 4.
\begin{figure}[htbp]
	\centering
	\includegraphics[width=0.45\textwidth]{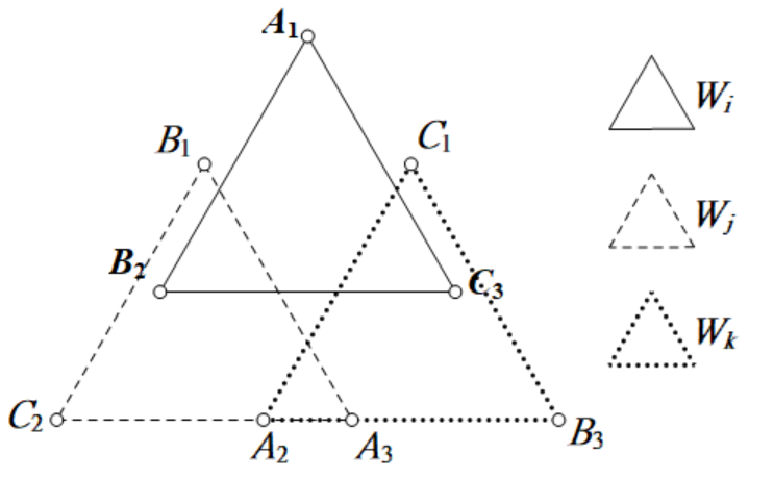}
	\vspace{-1em} \caption{The nonlinear detection of three copies is shown in the Figure.} \label{Fig.4}
\end{figure}
\section{IV. Lower bound of coherence measure from witness}

In addition to coherence detection, the task of quantifying coherence has also emerged as one of the prominent themes of quantum theory. Analogous to the setting of entanglement \cite{vt}, the coherence of a quantum state can be quantified by its robustness to the noise \cite{ctmm},
$$
\mathcal{R}(\rho)=\min\limits_{\sigma}R(\rho\|\sigma).
$$
We refer to $R(\rho\|\sigma)$ as the robustness of $\rho$ relative to $\sigma$, which is the minimal $s$ such that $\rho(s)=\frac{1}{1+s}(\rho+s\sigma)$ is incoherent. Note that $\sigma$ here is an arbitrary state. This is slightly different from the entanglement situation, for which the original robustness was defined in terms of pseudomixtures with separable states.

Using the distance between a given state and the incoherent state set to measure coherence is another direct and concise method. Various coherence measures can be generated by selecting various distances. A possible measure to compare two quantum states is the trace distance \cite{mc}
$$
\mathcal{D}(\rho,\sigma)=\frac{1}{2}\text{Tr}|\rho-\sigma|,
$$
where $\text{Tr}|\varrho|=\text{Tr}(\varrho\varrho^{\dagger})^{\frac{1}{2}}$ is the trace norm or 1-norm of $\varrho$. Corresponding to the trace distance, there is a coherence measure defined as the trace distance between a given state $\rho$ and the incoherent state set IC:
$$
\mathcal{E}(\rho)=\min\limits_{\sigma\in IC}\mathcal{D}(\rho,\sigma).
$$
Now, our idea is to seek the connection between the coherence measures $\mathcal{R}(\rho)$ and $\mathcal{E}(\rho)$, and investigate whether these possible connections imply a lower bound of $\mathcal{R}(\rho)$.
\begin{theorem}
	Given a quantum coherent state $\rho$, we have
	\begin{equation}\label{e6}
		\mathcal{R}(\rho)\geq\frac{\mathcal{E}(\rho)}{1-\mathcal{E}(\rho)}.
	\end{equation}
\end{theorem}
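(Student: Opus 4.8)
The plan is to begin from the optimal decomposition realizing $\mathcal{R}(\rho)$, show that the incoherent state it produces lies within trace distance $\mathcal{R}(\rho)/(1+\mathcal{R}(\rho))$ of $\rho$, and then invert this relation to recover (\ref{e6}). First I would use compactness of the state space together with closedness of $\mathrm{IC}$ to conclude that the minimum defining $\mathcal{R}(\rho)$ is attained: there is a state $\sigma^{*}$ and a value $s^{*}=\mathcal{R}(\rho)$ such that $\delta:=\frac{1}{1+s^{*}}(\rho+s^{*}\sigma^{*})$ is incoherent. The key structural observation is that this defining relation is itself a convex combination, $\delta=\frac{1}{1+s^{*}}\rho+\frac{s^{*}}{1+s^{*}}\sigma^{*}$, from which a direct computation gives $\rho-\delta=\frac{s^{*}}{1+s^{*}}(\rho-\sigma^{*})$.

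The second step is to turn this identity into a trace-distance estimate. Taking the trace norm of both sides and inserting the factor $\tfrac12$ yields $\mathcal{D}(\rho,\delta)=\frac{s^{*}}{1+s^{*}}\,\mathcal{D}(\rho,\sigma^{*})$. Since the trace distance between any two density operators satisfies $\mathcal{D}(\rho,\sigma^{*})\le 1$, I obtain $\mathcal{D}(\rho,\delta)\le \frac{s^{*}}{1+s^{*}}$. Because $\delta\in\mathrm{IC}$, the definition of $\mathcal{E}$ as a minimum over incoherent states forces $\mathcal{E}(\rho)\le\mathcal{D}(\rho,\delta)$, so altogether $\mathcal{E}(\rho)\le \frac{\mathcal{R}(\rho)}{1+\mathcal{R}(\rho)}$.

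Finally I would invert this inequality. The function $x\mapsto x/(1+x)$ is strictly increasing on $[0,\infty)$ with inverse $y\mapsto y/(1-y)$, so applying it to $\mathcal{E}(\rho)\le \mathcal{R}(\rho)/(1+\mathcal{R}(\rho))$ yields exactly $\mathcal{R}(\rho)\ge \mathcal{E}(\rho)/(1-\mathcal{E}(\rho))$, which is (\ref{e6}). Equivalently, one can clear denominators directly: $\mathcal{E}(\rho)\bigl(1+\mathcal{R}(\rho)\bigr)\le \mathcal{R}(\rho)$ rearranges to $\mathcal{E}(\rho)\le \mathcal{R}(\rho)\bigl(1-\mathcal{E}(\rho)\bigr)$.

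I expect the genuinely substantive input to be the uniform bound $\mathcal{D}(\rho,\sigma^{*})\le1$, which is what lets the crude admixing weight $s^{*}/(1+s^{*})$ control the distance to $\mathrm{IC}$; everything else is algebraic. Two minor points deserve care in the writeup: justifying that the optimizer $\sigma^{*}$ exists (so that "the" optimal $\delta$ is well defined) and noting that the inversion step implicitly uses $\mathcal{E}(\rho)<1$, which holds for any coherent state since $\mathcal{E}(\rho)=1$ would require $\rho$ to be trace-distance maximally far from every incoherent state.
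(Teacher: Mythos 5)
Your proposal is correct and follows essentially the same route as the paper: both use the optimal admixture $\delta=\frac{1}{1+s^{*}}(\rho+s^{*}\sigma^{*})$, the identity $\rho-\delta=\frac{s^{*}}{1+s^{*}}(\rho-\sigma^{*})$, the uniform bound $\mathcal{D}\le 1$, and the minimality of $\mathcal{E}$ followed by inversion of $x\mapsto x/(1+x)$. Your added remarks on the existence of the optimizer and on the case $\mathcal{E}(\rho)=1$ are sensible refinements the paper leaves implicit.
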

\begin{proof}
	Firstly, we set the optimal state achieving the minimal $s$ as $\sigma_{opt}$, that is to say, $\rho(s)=\frac{1}{1+s}(\rho+s\sigma_{opt})\in$ IC. Then one have
	\begin{equation*}\nonumber
		\begin{split}
			\frac{s}{1+s}&=\frac{s}{1+s}. 1\\
			&\geq \frac{s}{1+s}. \frac{\text{Tr}|\rho-\sigma_{opt}|}{2}\\
			&=\frac{s}{1+s}.\frac{1+s}{s}.\frac{\text{Tr}|\rho-\rho(s)|}{2}\\
			&=\mathcal{D}(\rho,\rho(s))\\
			&\geq\mathcal{E}(\rho).
		\end{split}
	\end{equation*}
	The first inequality is due to the fact that for any two states $\rho_1$ and $\rho_2$, we have $\text{Tr}|\rho_1-\rho_2|\leq2$, and the last inequality can be immediately obtained through the definition of $\mathcal{E}$. The inequality above is equivalent to
	$$
	1+\frac{1}{s}\leq \frac{1}{\mathcal{E}(\rho)}.
	$$
	Thus we have 
	$$
	\mathcal{R}(\rho)=s\geq\frac{\mathcal{E}(\rho)}{1-\mathcal{E}(\rho)}.
	$$
\end{proof}
Based on the fact that $y=\frac{x}{1-x}$ is an increasing function, the lower bound of $\mathcal{R}(\rho)$ can be obtained by considering the lower bound of $\mathcal{E}(\rho)$. Note that there is an inequality relationship between the trace norm of matrix $A$ and $|\text{Tr}(A)|$: $\text{Tr}|A|\geq|\text{Tr}(A)|$. Thus, we have the following results.
\begin{lemma}
	For a witness operator $\mathcal{W}$ with $-\mathbbm{1}\leq\mathcal{W}\leq\mathbbm{1}$, we have
	\begin{equation}\label{e7}
		\mathcal{E}(\rho)\geq\frac{1}{2}|\text{Tr}((\rho-\sigma_{opt})\mathcal{W})|,
	\end{equation}
	where $\sigma_{opt}$ represents the incoherent state closest to $\rho$ in the sense of trace distance.
\end{lemma}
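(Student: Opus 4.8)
The plan is to reduce the statement to a single operator inequality and then establish that inequality by spectral decomposition. Since $\sigma_{opt}$ is by definition the incoherent state minimizing the trace distance to $\rho$, we have the identity $\mathcal{E}(\rho)=\mathcal{D}(\rho,\sigma_{opt})=\frac{1}{2}\text{Tr}|\rho-\sigma_{opt}|$. Writing $A:=\rho-\sigma_{opt}$, the claim (\ref{e7}) is therefore equivalent to the norm-duality bound
$$
\text{Tr}|A|\geq|\text{Tr}(A\mathcal{W})|,
$$
valid for every Hermitian $\mathcal{W}$ with $-\mathbbm{1}\leq\mathcal{W}\leq\mathbbm{1}$. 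This is exactly the generalization of the elementary inequality $\text{Tr}|A|\geq|\text{Tr}(A)|$ recalled just above the lemma (the case $\mathcal{W}=\mathbbm{1}$), so the whole proof amounts to carrying the operator $\mathcal{W}$ through the same argument.

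To prove the displayed inequality I would use that $A$ is Hermitian, being a difference of two density matrices, and diagonalize it as $A=\sum_j\lambda_j|v_j\rangle\langle v_j|$ with real eigenvalues $\lambda_j$ and an orthonormal basis $\{|v_j\rangle\}$. Then $\text{Tr}|A|=\sum_j|\lambda_j|$, while
$$
\text{Tr}(A\mathcal{W})=\sum_j\lambda_j\langle v_j|\mathcal{W}|v_j\rangle.
$$
The spectral constraint $-\mathbbm{1}\leq\mathcal{W}\leq\mathbbm{1}$ forces each diagonal expectation to satisfy $-1\leq\langle v_j|\mathcal{W}|v_j\rangle\leq1$. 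Applying the triangle inequality together with this bound termwise yields
$$
|\text{Tr}(A\mathcal{W})|\leq\sum_j|\lambda_j|\,|\langle v_j|\mathcal{W}|v_j\rangle|\leq\sum_j|\lambda_j|=\text{Tr}|A|,
$$
which is the required estimate. Dividing by $2$ and using $\mathcal{E}(\rho)=\frac{1}{2}\text{Tr}|A|$ gives (\ref{e7}). Conceptually this is nothing more than the dual pairing between the trace norm and the operator norm specialized to $\|\mathcal{W}\|_{\infty}\leq1$, and one could equally invoke the variational formula $\text{Tr}|A|=\max_{-\mathbbm{1}\leq M\leq\mathbbm{1}}\text{Tr}(AM)$ instead of the explicit diagonalization.

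The computation itself is routine, so the only points needing genuine care are structural. First, one must record that the witness $\mathcal{W}$ is Hermitian, so that the ordering $-\mathbbm{1}\leq\mathcal{W}\leq\mathbbm{1}$ is meaningful and translates into the eigenvalue condition used above; this is part of the standing definition of a coherence witness. Second, one should keep the two uses of the symbol $\sigma_{opt}$ distinct: in the proof of Theorem 2 it denotes the noise state optimal for the robustness, whereas here it is the trace-distance-closest incoherent state, and it is precisely the identity $\mathcal{E}(\rho)=\frac{1}{2}\text{Tr}|\rho-\sigma_{opt}|$ that fixes this second meaning. Beyond these bookkeeping checks no real obstacle arises; the heart of the argument is simply the termwise bound on $\langle v_j|\mathcal{W}|v_j\rangle$ supplied by the operator inequality on $\mathcal{W}$.
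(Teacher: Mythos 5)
Your argument is correct, and it reaches the inequality by a somewhat different mechanism than the paper. The paper diagonalizes \emph{both} operators in the sense of singular values: it writes $\mathcal{E}(\rho)=\frac{1}{2}\sum_i\alpha_i$ with $\alpha_i$ the singular values of $\rho-\sigma_{opt}$ and $\beta_i$ those of $\mathcal{W}$, uses $0\leq\beta_i\leq1$ to get $\sum_i\alpha_i\geq\sum_i\alpha_i\beta_i$, and then invokes the von Neumann trace inequality (the ``majorization relationship'') to bound $|\text{Tr}((\rho-\sigma_{opt})\mathcal{W})|$ by $\sum_i\alpha_i\beta_i$. You instead diagonalize only $A=\rho-\sigma_{opt}$ and bound the diagonal expectations $\langle v_j|\mathcal{W}|v_j\rangle$ directly from the operator inequality $-\mathbbm{1}\leq\mathcal{W}\leq\mathbbm{1}$, which yields $|\text{Tr}(A\mathcal{W})|\leq\sum_j|\lambda_j|=\text{Tr}|A|$ with nothing heavier than the triangle inequality. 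Your route is the more elementary and self-contained one: it avoids appealing to von Neumann's inequality altogether and makes transparent that the lemma is exactly the trace-norm/operator-norm duality $|\text{Tr}(A\mathcal{W})|\leq\|A\|_1\|\mathcal{W}\|_\infty$. What the paper's version buys is the explicit intermediate quantity $\frac{1}{2}\sum_i\alpha_i\beta_i$, which is in principle a sharper lower bound on $\mathcal{E}(\rho)$ than the right-hand side of (\ref{e7}), though the paper does not exploit it. Your two flagged bookkeeping points (Hermiticity of $\mathcal{W}$, and the distinct meanings of $\sigma_{opt}$ in Theorem 2 versus this lemma) are both apt; the paper indeed overloads the symbol $\sigma_{opt}$ without comment.
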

\begin{proof}
	We use $(\alpha_1,\alpha_2,...,\alpha_n)$ and $(\beta_1,\beta_2,...,\beta_n)$ to represent the singular values of $\rho-\sigma_{opt}$ and $\mathcal{W}$ arranged in descending order, respectively, and $n$ is the order of both of them. Then one have
	\begin{equation}
		\begin{split}
			\mathcal{E}(\rho)&=\min\limits_{\sigma\in IC}\frac{1}{2}\text{Tr}|\rho-\sigma|\nonumber\\
			&=\frac{1}{2}\text{Tr}|\rho-\sigma_{opt}|\\
			&=\frac{1}{2}\sum_{i=1}^{n}\alpha_i\\
			&\geq\frac{1}{2}\sum_{i=1}^{n}\alpha_i\beta_i\\
			&\geq\frac{1}{2}|\text{Tr}((\rho-\sigma_{opt})\mathcal{W})|.
		\end{split}
	\end{equation}
	The first inequality is due to $0\leq\beta_i\leq1$ $(i=1,...,n)$, which is because $\mathcal{W}$ is a Hermitian matrix and the condition $-\mathbbm{1}\leq\mathcal{W}\leq\mathbbm{1}$. The second inequality is obtained by the majorization relationship and the relationship between the trace norm of matrix $A$ and $|\text{Tr}(A)|$: $\text{Tr}|A|\geq|\text{Tr}(A)|$.
\end{proof}

It is clear that if we want to make $\mathcal{W}$ satisfy the condition $-\mathbbm{1}\leq\mathcal{W}\leq\mathbbm{1}$, then $\mathcal{W}$ needs to be normalized. And based on the (\ref{e7}), different normalization methods will lead to the different lower bounds of $\mathcal{E}(\rho)$.
A simple and natural normalization method is to divide $\mathcal{W}$ by its maximum singular value. However, the lower bound given by this method is not very good here. The authors in \cite{szt} provide an inspiring idea
\begin{equation}\label{e8}
	\mathcal{W}_N=\frac{2\mathcal{W}-(\lambda_++\lambda_-)\mathbbm{1}}{\lambda_+-\lambda_-}.
\end{equation}
Where $\lambda_+$ and $\lambda_-$ represent the maximum and minimum eigenvalues of $\mathcal{W}$, respectively. Now, we show that (\ref{e8}) is a method of normalizing a given witness operator $\mathcal{W}$. 
\begin{lemma}
	The witness operator $\mathcal{W}_N$ defined by (\ref{e8}) satisfies 
	$$
	-\mathbbm{1}\leq\mathcal{W}_N\leq\mathbbm{1}.
	$$
\end{lemma}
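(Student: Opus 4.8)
The plan is to reduce the operator inequality $-\mathbbm{1}\leq\mathcal{W}_N\leq\mathbbm{1}$ to two one-sided positivity statements and verify each by a direct algebraic manipulation, exploiting that $\mathcal{W}$ is Hermitian with spectrum confined between $\lambda_-$ and $\lambda_+$.

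First I would record the spectral bounds. Since $\mathcal{W}$ is a Hermitian witness operator, its eigenvalues are real, and by the definitions of $\lambda_+$ and $\lambda_-$ as the largest and smallest eigenvalues we have the operator inequalities $\lambda_-\mathbbm{1}\leq\mathcal{W}\leq\lambda_+\mathbbm{1}$. Throughout I assume $\lambda_+>\lambda_-$, i.e. $\mathcal{W}$ is not a multiple of the identity; otherwise the denominator in (\ref{e8}) vanishes and the normalization is undefined. This exclusion is harmless, since such a $\mathcal{W}$ carries no coherence information.

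Next I would establish the upper bound. Substituting (\ref{e8}) and collecting terms over the positive denominator $\lambda_+-\lambda_-$ gives
$$
\mathbbm{1}-\mathcal{W}_N=\frac{2(\lambda_+\mathbbm{1}-\mathcal{W})}{\lambda_+-\lambda_-}.
$$
Because $\mathcal{W}\leq\lambda_+\mathbbm{1}$ the numerator is positive semidefinite, and the scalar factor is positive, so the right-hand side is positive semidefinite, which is exactly $\mathcal{W}_N\leq\mathbbm{1}$. The lower bound follows symmetrically from
$$
\mathbbm{1}+\mathcal{W}_N=\frac{2(\mathcal{W}-\lambda_-\mathbbm{1})}{\lambda_+-\lambda_-}\geq0,
$$
using $\mathcal{W}\geq\lambda_-\mathbbm{1}$. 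Combining the two inequalities yields $-\mathbbm{1}\leq\mathcal{W}_N\leq\mathbbm{1}$.

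There is essentially no hard step here: the content is just that the affine map $x\mapsto\bigl(2x-(\lambda_++\lambda_-)\bigr)/(\lambda_+-\lambda_-)$ sends the spectral interval $[\lambda_-,\lambda_+]$ onto $[-1,1]$, as one verifies by evaluating it at the two endpoints, which map to $+1$ and $-1$ respectively. The only point requiring care is the degenerate case $\lambda_+=\lambda_-$, which I dispose of by the remark above; otherwise the positivity of the denominator, on which both displayed steps rely, is guaranteed.
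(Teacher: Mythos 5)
Your proof is correct and follows essentially the same route as the paper: both arguments amount to observing that the affine map $\lambda\mapsto(2\lambda-(\lambda_++\lambda_-))/(\lambda_+-\lambda_-)$ sends the spectrum of the Hermitian operator $\mathcal{W}$, confined to $[\lambda_-,\lambda_+]$, into $[-1,1]$ — you phrase this as positive semidefiniteness of $\mathbbm{1}\pm\mathcal{W}_N$, while the paper checks the eigenvalues at the two endpoints, which is the same computation. Your explicit treatment of the degenerate case $\lambda_+=\lambda_-$ is a small but worthwhile addition that the paper omits.
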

\begin{proof}
	Using $\lambda$ to represent an eigenvalue of $\mathcal{W}$, based on (\ref{e8}), the corresponding eigenvalue of $\mathcal{W}_N$ is
	\begin{equation}\label{e9}
		\lambda_N=\frac{2\lambda-(\lambda_++\lambda_-)}{\lambda_+-\lambda_-}.
	\end{equation}
	To prove that $-\mathbbm{1}\leq\mathcal{W}_N\leq\mathbbm{1}$, we only need to prove that any eigenvalue of $\mathcal{W}_N$ falls between -1 and 1. According to the linear relationship in Eq. (\ref{e9}), we only need to consider the case when $\lambda=\lambda_+$ and the case when $\lambda=\lambda_-$.
	\begin{equation}\nonumber
		\begin{split}
			\lambda_{N,+}&=\frac{2\lambda_+-(\lambda_++\lambda_-)}{\lambda_+-\lambda_-}\\
			&=\frac{2\lambda_+-\lambda_+-\lambda_-}{\lambda_+-\lambda_-}\\
			&\leq 1.
		\end{split}
	\end{equation}
	Meanwhile, $\lambda_{N,-}\geq-1$ can be similarly proven. Therefore, it can be concluded that
	$$
	-\mathbbm{1}\leq\mathcal{W}_N\leq\mathbbm{1}.
	$$
\end{proof}
We introduce the lower bound of $\mathcal{E}(\rho)$ in the following.
\begin{theorem}
	If a coherent state $\rho$ can be detected by the witness $\mathcal{W}$, then we have
	\begin{equation}\label{e10}
		\mathcal{E}(\rho)\geq L_{\mathcal{W}_N}\equiv\frac{-\text{Tr}(\rho\mathcal{W})}{\lambda_+-\lambda_-},
	\end{equation}
	where $\lambda_+$ and $\lambda_-$ are the maximum eigenvalue and minimum eigenvalue of $\mathcal{W}$, respectively.
\end{theorem}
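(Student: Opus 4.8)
The plan is to combine the two lemmas just established. Since Lemma 2 guarantees that the normalized operator $\mathcal{W}_N$ defined in (\ref{e8}) satisfies $-\mathbbm{1}\leq\mathcal{W}_N\leq\mathbbm{1}$, I would feed $\mathcal{W}_N$ directly into the inequality (\ref{e7}) of Lemma 1, obtaining
$$\mathcal{E}(\rho)\geq\frac{1}{2}|\text{Tr}((\rho-\sigma_{opt})\mathcal{W}_N)|.$$
The whole argument then reduces to simplifying this right-hand side into the stated closed form.

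First I would substitute the definition $\mathcal{W}_N=\frac{2\mathcal{W}-(\lambda_++\lambda_-)\mathbbm{1}}{\lambda_+-\lambda_-}$ and expand the trace by linearity. The crucial simplification is that $\rho$ and $\sigma_{opt}$ are both density matrices, so $\text{Tr}(\rho-\sigma_{opt})=1-1=0$; this annihilates the term proportional to the identity $\mathbbm{1}$. What survives is exactly
$$\text{Tr}((\rho-\sigma_{opt})\mathcal{W}_N)=\frac{2\,\text{Tr}((\rho-\sigma_{opt})\mathcal{W})}{\lambda_+-\lambda_-},$$
so the factor $\frac{1}{2}$ and the numerical $2$ cancel, leaving $\mathcal{E}(\rho)\geq\frac{|\text{Tr}((\rho-\sigma_{opt})\mathcal{W})|}{\lambda_+-\lambda_-}$.

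The step I expect to carry the real content is removing the dependence on the unknown optimal incoherent state $\sigma_{opt}$, since the goal is a \emph{computable} bound. Here I would invoke the defining properties of the coherence witness: because $\rho$ is detected we have $\text{Tr}(\rho\mathcal{W})<0$, and because $\sigma_{opt}\in \text{IC}$ the witness condition forces $\text{Tr}(\sigma_{opt}\mathcal{W})\geq0$. Writing $\text{Tr}((\rho-\sigma_{opt})\mathcal{W})=\text{Tr}(\rho\mathcal{W})-\text{Tr}(\sigma_{opt}\mathcal{W})$, both contributions are $\leq0$, so the quantity is negative and its modulus satisfies $|\text{Tr}((\rho-\sigma_{opt})\mathcal{W})|=-\text{Tr}(\rho\mathcal{W})+\text{Tr}(\sigma_{opt}\mathcal{W})\geq-\text{Tr}(\rho\mathcal{W})$. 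Substituting this lower bound gives $\mathcal{E}(\rho)\geq\frac{-\text{Tr}(\rho\mathcal{W})}{\lambda_+-\lambda_-}=L_{\mathcal{W}_N}$, which is the claim. The main subtlety is precisely this last inequality: discarding the nonnegative term $\text{Tr}(\sigma_{opt}\mathcal{W})$ is exactly what frees the bound from $\sigma_{opt}$ and makes it usable in practice, at the price of some tightness whenever the optimal incoherent state is not annihilated by $\mathcal{W}$.
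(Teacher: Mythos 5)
Your proposal is correct and follows essentially the same route as the paper: apply Lemma 1 to the normalized witness $\mathcal{W}_N$, use $\text{Tr}(\rho-\sigma_{opt})=0$ to kill the identity term, and then drop the nonnegative contribution $\text{Tr}(\sigma_{opt}\mathcal{W})$ to obtain the computable bound. Your version is in fact slightly more careful than the paper's, since you spell out why the witness conditions make both terms inside the modulus nonpositive before discarding one of them.
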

\begin{proof}
	Firstly, normalize $\mathcal{W}$ to obtain $\mathcal{W}_N$. Based on the Lemma 1, one have
	\begin{small}
		\begin{eqnarray*}
			\mathcal{E}(\rho)&\geq&\frac{1}{2}|\text{Tr}((\rho-\sigma_{opt})\mathcal{W}_N)|\\
			&=&\frac{1}{2}\frac{|2.\text{Tr}((\rho-\sigma_{opt})\mathcal{W})|}{\lambda_+-\lambda_-}\\
			&=&\frac{1}{2}\frac{2.|\text{Tr}(\rho\mathcal{W})-\text{Tr}(\sigma_{opt}\mathcal{W})|}{\lambda_+-\lambda_-}\\
			&\geq&\frac{-\text{Tr}(\rho\mathcal{W})}{\lambda_+-\lambda_-}\\
			&=&L_{\mathcal{W}_N}.
		\end{eqnarray*}
	\end{small}
	The first equation is derived from a simple application of the fact that the density operator trace is 1, and the second inequality is natural because a non negative term is discarded by us.
\end{proof}
Then we can immediately obtain the following results.
\begin{corollary}
		For a coherent quantum state $\rho$, we have
		\begin{equation}\label{e11}
			\mathcal{R}(\rho)\geq\frac{L_{\mathcal{W}_N}}{1-L_{\mathcal{W}_N}},
		\end{equation}
		where the expression for $L_{\mathcal{W}_N}$ is given by Eq.\,(10).
\end{corollary}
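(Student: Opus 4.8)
The plan is to deduce the bound by composing the two inequalities already in hand, Theorem 2 and Theorem 3, through the monotonicity of the rational map $f(x)=\frac{x}{1-x}$ that was already noted in the text preceding the statement. First I would assemble the three ingredients: Theorem 2 gives $\mathcal{R}(\rho)\geq\frac{\mathcal{E}(\rho)}{1-\mathcal{E}(\rho)}$, Theorem 3 gives $\mathcal{E}(\rho)\geq L_{\mathcal{W}_N}$, and the derivative computation $f^{\prime}(x)=\frac{1}{(1-x)^2}>0$ confirms that $f$ is strictly increasing on each interval of its domain.

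The genuinely load-bearing step is to verify that both $\mathcal{E}(\rho)$ and $L_{\mathcal{W}_N}$ lie in $[0,1)$, so that monotonicity of $f$ may be applied without straddling the pole at $x=1$. For $\mathcal{E}(\rho)$ this follows from the trace-distance bound $\text{Tr}|\rho_1-\rho_2|\leq 2$ used in the proof of Theorem 2, which forces $0\leq\mathcal{E}(\rho)\leq 1$, and strictly below $1$ because a coherent $\rho$ is not incoherent. For $L_{\mathcal{W}_N}$, the hypothesis that $\rho$ is detected by $\mathcal{W}$ means $\text{Tr}(\rho\mathcal{W})<0$, so $L_{\mathcal{W}_N}=\frac{-\text{Tr}(\rho\mathcal{W})}{\lambda_+-\lambda_-}>0$; combined with $L_{\mathcal{W}_N}\leq\mathcal{E}(\rho)<1$ coming directly from Theorem 3, this places $L_{\mathcal{W}_N}$ safely inside $[0,1)$.

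With the domain secured, I would apply $f$ to the inequality $\mathcal{E}(\rho)\geq L_{\mathcal{W}_N}$ to obtain $\frac{\mathcal{E}(\rho)}{1-\mathcal{E}(\rho)}\geq\frac{L_{\mathcal{W}_N}}{1-L_{\mathcal{W}_N}}$, and then chain this beneath Theorem 2 to conclude $\mathcal{R}(\rho)\geq\frac{L_{\mathcal{W}_N}}{1-L_{\mathcal{W}_N}}$. There is no real computational burden here; the only point demanding care, and the step I expect to be most delicate, is precisely this domain check that keeps both arguments of $f$ strictly below $1$, since applying the monotonicity of $f$ across its singularity would be invalid and could even flip the direction of the inequality.
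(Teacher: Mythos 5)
Your proposal is correct and follows exactly the route the paper intends: the corollary is stated as an immediate consequence of chaining Theorem 3 ($\mathcal{E}(\rho)\geq L_{\mathcal{W}_N}$) into Theorem 2 via the monotonicity of $y=\frac{x}{1-x}$, which the paper explicitly flags just after Theorem 2. Your added domain check that both arguments stay in $[0,1)$ is a sensible piece of care that the paper omits, but it does not change the argument.
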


We will test the effectiveness of our lower bound through the following examples.

\mathbi{Example 5}.

Consider the $d$-dimensional maximum entangled state $|\psi^+\rangle=\frac{1}{\sqrt{d}}\sum_{i}|ii\rangle$ and the coherence witness is given by \cite{zzy},
$$
W=\bigtriangleup(|\psi^+\rangle\langle\psi^+|)-|\psi^+\rangle\langle\psi^+|,
$$
where $\bigtriangleup$ represents dephasing operation. The lower bound of coherence robustness given by \cite{zzy} is 
$$
\mathcal{R}(|\psi^+\rangle)\geq \min\limits_{\sigma\in IC}\||\psi^+\rangle\langle\psi^+|-\sigma\|_{2}^2=1-\frac{1}{d},
$$
where $\|M\|_2=\sqrt{\text{Tr}(M^{\dagger}M)}$ is called the 2-norm or Frobenius norm.

For the witness operator $W$, we have $\lambda_+=\frac{1}{d}$ and $\lambda_-=\frac{1}{d}-1$. By substituting these quantities into our lower bound (\ref{e10}), we obtain that $L_{\mathcal{W}_N}=1-\frac{1}{d}$. According to the Theorem 4, one have $\mathcal{R}(|\psi^+\rangle)\geq d-1$, which is a tight lower bound. 

\mathbi{Example 6}.

Let us consider the isotropic state,
$$
\rho_{iso}=\frac{1-v}{8}(\mathbbm{1}-|\Phi^+\rangle\langle\Phi^+|)+v|\Phi^+\rangle\langle\Phi^+|,
$$
where $0\leq v\leq1$ and $|\Phi^+\rangle=\frac{1}{\sqrt{3}}\sum_{i=0}^{2}|ii\rangle$. The coherence witness operator we take as $W_v=\bigtriangleup(\rho_{iso})-\rho_{iso}$.

Let us first compute our lower bound. The calculation yields $L_{\mathcal{W}_N}=\frac{(9v-1)^2}{|108v-12|}$. Hence, by Corollary 1, the lower bound is obtained as $L_R$:
\begin{equation*}
L_R=\frac{\frac{(9v-1)^2}{|108v-12|}}{1-\frac{(9v-1)^2}{|108v-12|}}.
\end{equation*}
To verify the validity of our lower bound, let us compare it against existing robustness lower bounds.
In \cite{ctmm}, the authors proposed a method for estimating $\mathcal{R}$,
\begin{equation}\label{e12}
	\begin{split}
		\mathcal{R}(\rho)&\geq L_1=-\text{Tr}(\rho\mathcal{W}),
	\end{split}
\end{equation}
where $-\mathbbm{1}\leq\mathcal{W}\leq\mathbbm{1}$ and $\mathcal{W}$ is a coherence witness. Thus on have
\begin{equation*}
L_1=-\text{Tr}(\rho_{iso}\mathcal{W}_N)=\frac{|1-3v|}{2}.
\end{equation*}
Here, $\mathcal{W}_N$ denotes the coherence witness operator obtained by substituting $W_v$ into Eq.\,(8).
The another lower bound of coherence robustness provided by \cite{zzy} is 
\begin{equation}\label{e13}
	\mathcal{R}(\rho_{iso})\geq L_2=\min\limits_{\sigma\in IC}\|\rho_{iso}-\sigma\|_2^2=\frac{(9v-1)^2}{96}.
\end{equation}
 The relationship between these lower bounds can be found in Figure 5.
\begin{figure}[htbp]
	\centering
	\includegraphics[width=0.5\textwidth,height=0.4\textwidth]{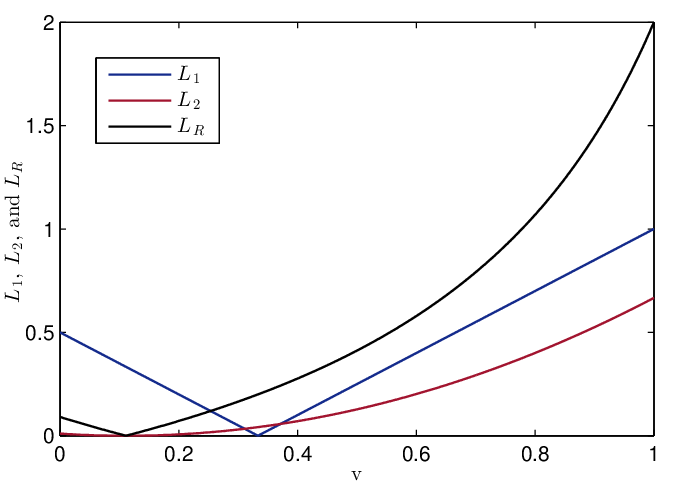}
	\vspace{-1em} \caption{The relationship between $L_1$, $L_2$ and our lower bound $L_R$ for $\rho_{iso}$ in example 6 is shown in the Figure.} \label{Fig.5}
\end{figure}
We find that for $v<0.254$, $L_1\geq L_R\geq L_2$, while for $v>0.254$, $L_R>\max\{L_1, L_2\}$. This indicates that our estimate of $\mathcal{R}(\rho_{iso})$ for $v>0.254$ is tighter than both $L_1$ and $L_2$. Furthermore, it is noteworthy that our lower bound $L_R$ satisfies $L_R\geq L_2$ for all $v\in[0,1]$. However, as observed in Figure 5, $L_2>L_1$ within a subinterval of $[0.3,0.4]$. While our bound demonstrates overall superior performance, integrating multiple bounds in practical applications may provide a tighter estimate of the coherence robustness $\mathcal{R}$.

\section{V. Conclusions and discussions}

In this paper, we investigated the detection and quantization for quantum coherence. We firstly provided a coherence criterion without full quantum state tomography. By detailed example we illustrated the effectiveness of these criterion in detecting coherence. Secondly, we proposed a nonlinear method for detecting coherence, where we found that for given coherent states and witnesses, normal witness detection fails but nonlinear detection succeeds. In particular, the dimensions of witness and state can be different. Moreover, we illustrated that when the nonlinear detection of the two copies of state fails, the three copies may be successful. This provides a novel perspective for our coherence detection. Robustness of coherence is a useful measure for coherence, but for general coherent states, coherent robustness is difficult to calculate. To address this, we provided a lower bound for coherence robustness from the witnesses. We found that our method provided tight lower bounds for some quantum states. We compared our lower bounds with two existing bounds, $L_1$ \cite{ctmm} and $L_2$ \cite{zzy}, in specific examples. Our bound was lower than $L_1$ in certain parameter ranges but higher than both bounds elsewhere, while $L_1$ itself was surpassed by $L_2$ in some intervals. Overall, our lower bound demonstrated superior performance. However, we concluded that combining these bounds in practical applications would be necessary to achieve a more robust estimation of coherence robustness.

As an important quantum correlation, coherence plays a pivotal role in various quantum technologies. Our findings can deepen our understanding of the methodologies for coherence detection and quantization. 

\bigskip Acknowledgments: This work is supported by the National Natural Science Foundation of China (NSFC) under Grant No.12204137; the Hainan Provincial Natural Science Foundation of China under Grant No.125RC744, the specific research fund of the Innovation Platform for Academicians of Hainan Province.

\begin{widetext}
	\appendix
	\section{Appendix A}
In example 2, we consider a more general class of $X$ states
$$
\rho_{x}=\frac{1}{4}\left(\begin{array}{cccc}
	1 & 0 & 0 & \alpha \\
	0 & 1 & \beta & 0 \\
	0 & \beta & 1 & 0 \\
	\alpha & 0 & 0 & 1 \\
\end{array}\right)
$$
and the witnesses we set are $W=|00\rangle\langle10|+|10\rangle\langle00|$ and $V=|01\rangle\langle11|+|11\rangle\langle01|$.

Now utilizing our nonlinear detection. For simplicity, we will write down each nonzero term of $\rho_x^{\otimes2}$ after the action of observable $W_{A_1B_2}$ and $V_{B_1A_2}$. Without causing confusion, we will abbreviate $|0\rangle_{A_1}|0\rangle_{B_1}|0\rangle_{A_2}|0\rangle_{B_2}$ as $|0000\rangle$, and so on for other terms.
\begin{equation*}
\begin{split}
&|0001\rangle\langle0001|\xrightarrow[V_{B_1B_2}]{W_{A_1A_2}}|1101\rangle\langle0001|,\hspace{2.5cm}|0001\rangle\langle0010|\xrightarrow[V_{B_1B_2}]{W_{A_1A_2}}|1101\rangle\langle0010|,\\
\bigstar&|0001\rangle\langle1101|\xrightarrow[V_{B_1B_2}]{W_{A_1A_2}}|1101\rangle\langle1101|,\hspace{2.5cm}|0001\rangle\langle1110|\xrightarrow[V_{B_1B_2}]{W_{A_1A_2}}|1101\rangle\langle1110|,\\
&|0101\rangle\langle0101|\xrightarrow[V_{B_1B_2}]{W_{A_1A_2}}|1001\rangle\langle0101|,\hspace{2.5cm}|0101\rangle\langle0110|\xrightarrow[V_{B_1B_2}]{W_{A_1A_2}}|1001\rangle\langle0110|,\\
\bigstar&|0101\rangle\langle1001|\xrightarrow[V_{B_1B_2}]{W_{A_1A_2}}|1001\rangle\langle1001|,\hspace{2.5cm}|0101\rangle\langle1010|\xrightarrow[V_{B_1B_2}]{W_{A_1A_2}}|1001\rangle\langle1010|,\\
\bigstar&|1001\rangle\langle0101|\xrightarrow[V_{B_1B_2}]{W_{A_1A_2}}|0101\rangle\langle0101|,\hspace{2.5cm}|1001\rangle\langle0110|\xrightarrow[V_{B_1B_2}]{W_{A_1A_2}}|0101\rangle\langle0110|,\\
&|1001\rangle\langle1001|\xrightarrow[V_{B_1B_2}]{W_{A_1A_2}}|0101\rangle\langle1001|,\hspace{2.5cm}|1001\rangle\langle1010|\xrightarrow[V_{B_1B_2}]{W_{A_1A_2}}|0101\rangle\langle1010|,\\
\end{split}
\end{equation*}
\begin{equation*}
	\begin{split}
		\bigstar&|1101\rangle\langle0001|\xrightarrow[V_{B_1B_2}]{W_{A_1A_2}}|0001\rangle\langle0001|,\hspace{2.5cm}|1101\rangle\langle0010|\xrightarrow[V_{B_1B_2}]{W_{A_1A_2}}|0001\rangle\langle0010|,\\
		&|1101\rangle\langle1101|\xrightarrow[V_{B_1B_2}]{W_{A_1A_2}}|0001\rangle\langle1101|,\hspace{2.5cm}|1101\rangle\langle1110|\xrightarrow[V_{B_1B_2}]{W_{A_1A_2}}|0001\rangle\langle1110|.
	\end{split}
\end{equation*}
Thus we have 
$$
\text{Tr}[(W_{A_1B_2}\otimes V_{B_1A_2})\rho_x^{\otimes2}]=\frac{\alpha}{16}+\frac{\beta}{16}+\frac{\beta}{16}+\frac{\alpha}{16}=\frac{\alpha+\beta}{8}.
$$
\section{Appendix B}
Now, we use our strategy to detect the GHZ state affected by white noise $\rho_g=\frac{1-g}{8}\mathbbm{1}+g|GHZ\rangle\langle GHZ|$ given in example 3. For simplicity, we also only list nonzero terms in the following.
\begin{equation}\nonumber
	\begin{split}
		&|000111\rangle\langle000000|\xrightarrow[W_{B_1B_2},W_{C_1C_2}]{W_{A_1A_2}}|111000\rangle\langle000000|,\hspace{1.5cm}|000111\rangle\langle000111|\xrightarrow[W_{B_1B_2},W_{C_1C_2}]{W_{A_1A_2}}|111000\rangle\langle000111|,\\
		\bigstar&|000111\rangle\langle111000|\xrightarrow[W_{B_1B_2},W_{C_1C_2}]{W_{A_1A_2}}|111000\rangle\langle111000|,\hspace{1.5cm}|000111\rangle\langle111111|\xrightarrow[W_{B_1B_2},W_{C_1C_2}]{W_{A_1A_2}}|111000\rangle\langle111111|,\\
		\bigstar&|111000\rangle\langle000111|\xrightarrow[W_{B_1B_2},W_{C_1C_2}]{W_{A_1A_2}}|000111\rangle\langle000111|,\hspace{1.5cm}|111000\rangle\langle000000|\xrightarrow[W_{B_1B_2},W_{C_1C_2}]{W_{A_1A_2}}|000111\rangle\langle000000|,\\
		&|111000\rangle\langle111000|\xrightarrow[W_{B_1B_2},W_{C_1C_2}]{W_{A_1A_2}}|000111\rangle\langle111000|,\hspace{1.5cm}|111000\rangle\langle111111|\xrightarrow[W_{B_1B_2},W_{C_1C_2}]{W_{A_1A_2}}|000111\rangle\langle111111|.
	\end{split}
\end{equation}
Thus one have
$$
\text{Tr}[(W_{A_1A_2}\otimes W_{B_1B_2}\otimes W_{C_1C_2})\rho_g^{\otimes2}]=\frac{1}{2}\neq0.
$$

\section{Appendix C}

In this section, we consider the nonlinear detection of three copies of $\rho_c$ in example 4, that is to say, one caculate the values of $\text{Tr}[(W_{A_1B_2C_3}\otimes W_{B_1C_2A_3}\otimes W_{C_1A_2B_3})\rho_c^{\otimes3}]
$.
\begin{equation*}
	\begin{split}
		&|000000111\rangle\langle000000000|\xrightarrow[W_{B_1C_2A_3},W_{C_1A_2B_3}]{W_{A_1B_2C_3}}|111111111\rangle\langle000000000|,\\
		&|000000111\rangle\langle000000111|\xrightarrow[W_{B_1C_2A_3},W_{C_1A_2B_3}]{W_{A_1B_2C_3}}|111111111\rangle\langle000000111|,\\
		&|000000111\rangle\langle000111000|\xrightarrow[W_{B_1C_2A_3},W_{C_1A_2B_3}]{W_{A_1B_2C_3}}|111111111\rangle\langle000111000|,\\
		&|000000111\rangle\langle000111111|\xrightarrow[W_{B_1C_2A_3},W_{C_1A_2B_3}]{W_{A_1B_2C_3}}|111111111\rangle\langle000111111|,\\
		&|000000111\rangle\langle111000000|\xrightarrow[W_{B_1C_2A_3},W_{C_1A_2B_3}]{W_{A_1B_2C_3}}|111111111\rangle\langle111000000|,\\
		&|000000111\rangle\langle111000111|\xrightarrow[W_{B_1C_2A_3},W_{C_1A_2B_3}]{W_{A_1B_2C_3}}|111111111\rangle\langle111000111|,\\
		&|000000111\rangle\langle111111000|\xrightarrow[W_{B_1C_2A_3},W_{C_1A_2B_3}]{W_{A_1B_2C_3}}|111111111\rangle\langle111111000|,\\
		\bigstar&|000000111\rangle\langle111111111|\xrightarrow[W_{B_1C_2A_3},W_{C_1A_2B_3}]{W_{A_1B_2C_3}}|111111111\rangle\langle111111111|,\\
		&|111111111\rangle\langle000000000|\xrightarrow[W_{B_1C_2A_3},W_{C_1A_2B_3}]{W_{A_1B_2C_3}}|000000111\rangle\langle000000000|,\\
	\end{split}
\end{equation*}
\begin{equation*}
\begin{split}
	\bigstar&|111111111\rangle\langle000000111|\xrightarrow[W_{B_1C_2A_3},W_{C_1A_2B_3}]{W_{A_1B_2C_3}}|000000111\rangle\langle000000111|,\\
	&|111111111\rangle\langle000111000|\xrightarrow[W_{B_1C_2A_3},W_{C_1A_2B_3}]{W_{A_1B_2C_3}}|000000111\rangle\langle000111000|,\\
	&|111111111\rangle\langle000111111|\xrightarrow[W_{B_1C_2A_3},W_{C_1A_2B_3}]{W_{A_1B_2C_3}}|000000111\rangle\langle000111111|,\\
	&|111111111\rangle\langle111000000|\xrightarrow[W_{B_1C_2A_3},W_{C_1A_2B_3}]{W_{A_1B_2C_3}}|000000111\rangle\langle111000000|,\\
	&|111111111\rangle\langle111000111|\xrightarrow[W_{B_1C_2A_3},W_{C_1A_2B_3}]{W_{A_1B_2C_3}}|000000111\rangle\langle111000111|,\\
	&|111111111\rangle\langle111111000|\xrightarrow[W_{B_1C_2A_3},W_{C_1A_2B_3}]{W_{A_1B_2C_3}}|000000111\rangle\langle111111000|,\\
	&|111111111\rangle\langle111111111|\xrightarrow[W_{B_1C_2A_3},W_{C_1A_2B_3}]{W_{A_1B_2C_3}}|000000111\rangle\langle111111111|.
\end{split}
\end{equation*}
Thus when $c\neq0$, we have 
$$
\text{Tr}[(W_{A_1B_2C_3}\otimes W_{B_1C_2A_3}\otimes W_{C_1A_2B_3})\rho_c^{\otimes3}]=\frac{c^3}{2}+\frac{c^3}{2}=c^3\neq0.
$$
\end{widetext}
\end{document}